\documentclass[%
 reprint,
 amsmath,amssymb,
 aps,
 prx,
 superscriptaddress,
 nofootinbib
]{revtex4-2}

\usepackage{graphicx}
\usepackage{dcolumn}
\usepackage{amsmath, amssymb, amsthm, bm, amsfonts, mathrsfs, bbm}
\usepackage{dsfont}
\usepackage{physics}
\usepackage[colorlinks=true,linkcolor=blue,citecolor=blue,urlcolor=blue]{hyperref}

\newtheorem{theorem}{Theorem}

\newtheorem{corollary}{Corollary}

\newcommand{\E}{\mathbb{E}}
\newcommand{\snorm}[1]{\|#1\|_{\text{S}}}

\begin{document}

\title{Classical shadows for non-iid quantum sources}

\author{Leonardo Zambrano}
\email{leonardo.zambrano@icfo.eu}
\affiliation{ICFO - Institut de Ciencies Fotoniques, The Barcelona Institute of Science and Technology, 08860 Castelldefels, Barcelona, Spain}

\begin{abstract}
Classical shadow tomography has emerged as a powerful framework for predicting properties of quantum many-body systems with favorable sample complexity. Standard theoretical guarantees, however, rely on the assumption that experimental rounds are independent and identically distributed (i.i.d.). This idealization is often violated in practice, where parameter drift, environmental noise, and active feedback generate history-dependent sequences of states or channels. To address this, we introduce a robust classical shadow protocol based on a truncated mean estimator. We prove that its sample complexity for predicting properties of the time-averaged state or channel matches the standard i.i.d. scaling governed by the shadow norm, even when experimental rounds depend arbitrarily on the past. Our results establish the robustness of the shadow formalism beyond the i.i.d. regime.

\end{abstract}

\maketitle

\section{Introduction}

Predicting properties of an unknown quantum system from a limited number of measurements is a central task in quantum information science. Over the past few years, \emph{classical shadow tomography} has become a standard framework for this purpose, enabling the efficient estimation of many observables from a relatively small number of experimental samples \cite{huang2020predicting, nguyen2022optimizing, hadfield2022measurements, innocenti2023shadow, caprotti2024optimizing, elben2023randomized}. By exploiting randomized measurements together with efficient classical post-processing, classical shadows circumvent the exponential overhead traditionally associated with full quantum state tomography \cite{hradil1997quantum, haah2017sample}. This efficiency has enabled applications ranging from quantum simulation and verification to detecting entanglement structure and classifying quantum phases of matter \cite{zhao2021fermionic, huggins2022unbiasing, liu2021benchmarking, elben2020mixed, neven2021symmetry}. Beyond state characterization, the shadow paradigm has also been extended to quantum processes, allowing estimation of properties of quantum channels via randomized input–output experiments \cite{kunjummen2023shadow, levy2024classical}. Together, these developments underscore the versatility of the shadow framework as a general tool for quantum device characterization.

The standard theoretical analysis of classical shadows relies crucially on an \emph{independent and identically distributed} (i.i.d.) assumption: each experimental round is assumed to prepare the same fixed quantum state, independently of all previous experimental history. Under this assumption, the empirical average of the shadow estimator concentrates rapidly around the true expectation value. While mathematically convenient, this assumption is often violated in realistic experimental settings. In practice, experimental platforms are subject to slow parameter drift, residual environmental memory, and active feedback mechanisms that correlate different measurement rounds \cite{klimov2018fluctuations, proctor2020detecting, white2020demonstration, mcewen2021removing}. As a result, the effective quantum state may vary over time in a manner that depends on the entire previous experimental history. Analogous challenges arise in quantum process characterization, where the implemented channel may drift over time or exhibit memory effects that correlate successive uses.

These effects raise a fundamental question: \emph{to what extent does the efficiency of classical shadow tomography survive beyond the i.i.d.\ regime?} Addressing this question is essential both for interpreting experimental data and for understanding the robustness of shadow-based protocols in uncontrolled or even adversarial environments.

Recent theoretical efforts have sought to relax the i.i.d.\ assumption, though often under restricted noise models or at the cost of increased sample complexity. The approach based on the quantum de Finetti theorem \cite{fawzi2024learning} adapts any learning protocol to a non-i.i.d.\ setting, but introduces overheads that render classical shadows impractical. Ref.~\cite{neven2021symmetry} analyzes independent drift, where the state varies over time while rounds remain statistically independent. While this framework successfully handles independent parameter drift, it fails to capture the temporal correlations and memory effects intrinsic to experimental protocols. Consequently, a gap remains for a sample-efficient estimation framework robust to fully adaptive and history-dependent noise.

In this work, we develop a protocol for classical shadow estimation that remains valid beyond the i.i.d.\ setting. We consider a general scenario in which the quantum state prepared at round $t$, denoted $\rho_t$, may depend arbitrarily on the full previous experimental history. Rather than estimating the expectation value on a single fixed state, our goal is to estimate the \emph{time-averaged observable}
\begin{align}
    \bar{o} = \frac{1}{N} \sum_{t=1}^N \tr(O \rho_t),
\end{align}
which is the operationally meaningful quantity in this setting. Via the Choi representation, our analysis also applies to non-i.i.d.\ quantum processes.

A central technical challenge in analyzing classical shadows is the heavy-tailed nature of the single-shot estimators, which is conventionally managed using the median-of-means principle. However, this approach relies on splitting data into independent batches, an assumption that breaks down under adaptive noise. To overcome this, we employ a \emph{truncated mean estimator}, which explicitly truncates the single-shot estimates at a controlled threshold \cite{lugosi2019mean, lugosi2021robust, aliakbarpour2025shadow}. This truncation serves two purposes: it renders the estimator robust to heavy tails without requiring independent batches, and it strictly bounds the range of the increments.

With this boundedness established, the sequence of truncated estimates can be viewed as a martingale difference sequence with respect to the experimental history \cite{williams1991probability}. This martingale structure allows us to replace independence-based concentration arguments with tools from martingale probability theory. Specifically, we apply Freedman’s inequality \cite{freedman1975tail} to derive non-asymptotic sample complexity bounds for the time-averaged observable.

Ultimately, we show that classical shadow estimation retains the standard sample complexity scaling, governed by the shadow norm, even when the underlying quantum states are prepared adaptively and exhibit arbitrary temporal correlations. In particular, for Pauli and Clifford measurement schemes we recover the familiar $3^k$ scaling for $k$-local observables and the $\tr(O^2)$ scaling for global observables, respectively. The same martingale-based argument applies directly to shadow process tomography, yielding comparable guarantees for time-varying quantum channels. Our bounds also feature improved numerical constants compared to standard theoretical analyses, making them directly relevant for practical, resource-constrained experiments.

\section{Preliminaries}

In this section, we introduce the measurement formalism and estimators used throughout the paper, and recall the probabilistic tools needed for our analysis beyond the i.i.d.\ setting. We review classical shadow estimators in the language of generalized measurements and frame theory, and summarize the martingale framework underlying our concentration bounds.

\subsection{Classical shadows estimator}

We consider a quantum system with Hilbert space $\mathcal{H}$ of finite dimension $d$. A general quantum measurement is described by a positive operator-valued measure (POVM)
\begin{align}
\mathcal{E} = \{E_k\}_{k=1}^M,
\end{align}
where each effect $E_k \ge 0$ acts on $\mathcal{H}$ and the normalization condition $\sum_{k=1}^M E_k = \mathds{1}$ holds. When a system prepared in a quantum state $\rho$ is measured using $\mathcal{E}$, the probability of observing outcome $k$ is given by Born’s rule,
\begin{align}
    p_k(\rho) = \tr(\rho E_k).
\end{align}

Throughout this work, we assume that the POVM $\mathcal{E}$ is \emph{informationally complete} (IC), meaning that the effects $\{E_k\}_{k=1}^{M}$ span the space of Hermitian operators on $\mathcal{H}$. Informational completeness ensures that the measurement statistics uniquely determine the underlying quantum state, and is a standard assumption in both quantum tomography and classical shadow protocols.

It is convenient to express the measurement process as a linear map from operators to probability vectors. We define the measurement map $\mathcal{M} : \mathcal{L}(\mathcal{H}) \to \mathbb{R}^M$ by
\begin{align}
    \mathcal{M}(\rho) =
    \begin{pmatrix}
        \tr(\rho E_1) \\
        \vdots \\
        \tr(\rho E_M)
    \end{pmatrix}.
\end{align}
The adjoint map $\mathcal{M}^\dagger : \mathbb{R}^M \to \mathcal{L}(\mathcal{H})$ with respect to the Hilbert--Schmidt inner product is given by $\mathcal{M}^\dagger(\vec{x}) = \sum_{k=1}^M x_k E_k$.

To reconstruct a quantum state or estimate expectation values from measurement data, we require an inversion of the measurement map. Following standard approaches in quantum tomography and classical shadow estimation \cite{huang2020predicting, nguyen2022optimizing, hadfield2022measurements, innocenti2023shadow, guta2020fast, surawy2022projected, zambrano2025fast}, we adopt the \emph{least-squares estimator} (LSE). Given an observed frequency vector $\vec{f} \in \mathbb{R}^M$, the LSE $\hat{\rho}$ is defined as
\begin{align}
    \hat{\rho}
    = \operatorname*{argmin}_{\tau \in \mathcal{L}(\mathcal{H})}
    \| \mathcal{M}(\tau) - \vec{f} \|_2^2 .
\end{align}

The equations associated with this optimization problem yield
\begin{align}
    \mathcal{M}^\dagger \mathcal{M}(\hat{\rho}) = \mathcal{M}^\dagger(\vec{f}),
\end{align}
where the operator $\mathcal{M}^\dagger \mathcal{M}$ is known as the \emph{frame superoperator},
\begin{align}
    \mathcal{S}(\rho)
    = \mathcal{M}^\dagger \mathcal{M}(\rho)
    = \sum_{k=1}^M \tr(\rho E_k)\, E_k .
\end{align}
For IC POVMs, $\mathcal{S}$ is strictly positive and therefore invertible. The corresponding linear reconstruction map is given by $\mathcal{S}^{-1}\mathcal{M}^\dagger$.

In the context of shadow tomography, we operate in a highly undersampled regime in which each measurement round produces only a single outcome $k$. This corresponds to an empirical frequency vector $\vec{e}_k$, the $k$-th standard basis vector of $\mathbb{R}^M$. The associated \emph{single-shot estimator}, or classical shadow, is defined as
\begin{align} \label{eq:estimator_def}
    \hat{\rho}_k
    = \mathcal{S}^{-1}\!\left( \mathcal{M}^\dagger(\vec{e}_k) \right)
    = \mathcal{S}^{-1}(E_k) .
\end{align}

The operators $\{\hat{\rho}_k\}_{k=1}^{M}$ form the canonical dual frame associated with the POVM effects $\{E_k\}_{k=1}^{M}$ \cite{innocenti2023shadow}. By construction, this estimator is unbiased: for any quantum state $\rho$,
\begin{align}
    \mathbb{E}_{k \sim p(\rho)}[\hat{\rho}_k]
    = \sum_{k=1}^M \tr(\rho E_k)\, \mathcal{S}^{-1}(E_k) = \mathcal{S}^{-1} \mathcal{S}(\rho)
    = \rho .
\end{align}
As a consequence, for any observable $O$, the quantity $\tr(O \hat{\rho}_k)$ provides an unbiased estimator of $\tr(O \rho)$.

While the classical shadows framework applies to any IC POVM, there are two canonical protocols: Clifford shadows and random Pauli shadows \cite{huang2020predicting}.

In the Clifford shadows protocol, the full $d$-dimensional system is rotated by a global unitary $U$ drawn uniformly from the Clifford group $\mathrm{Cl}(d)$, followed by a measurement in the computational basis $\{|z\rangle\}_{z=1}^{d}$. The POVM effects are $E_{U,z} = U^\dagger |z\rangle\langle z| U$. Relying on the fact that the Clifford group forms a unitary $2$-design, the frame superoperator is a depolarizing channel, $\mathcal{S}_{\mathrm{Cl}}(\rho) = (d+1)^{-1} (\rho + \mathds{1})$.
Inverting this map yields the single-shot estimator
\begin{align}
    \hat{\rho} = (d+1) U^\dagger |z\rangle\langle z| U - \mathds{1}.
\end{align}
This protocol is known to be efficient for estimating global properties, such as the fidelity with respect to a pure state or the expectation values of observables with low Frobenius norm.

For an $n$-qubit system, one may instead measure each qubit independently in a random Pauli basis ($X, Y,$ or $Z$). The global frame superoperator factorizes into a tensor product of single-qubit channels. The inverse map yields the estimator
\begin{align}
    \hat{\rho} = \bigotimes_{j=1}^n \left( 3 U_j^\dagger |z_j\rangle\langle z_j| U_j - \mathds{1}_j \right),
\end{align}
where $U_j$ is the local unitary and $z_j \in \{0,1\}$ the outcome on qubit $j$. This protocol is particularly efficient for local observables: the variance for estimating a Pauli operator $P$ of weight $k$ scales as $3^k$, independent of the total system size $n$.

\subsection{Martingales and filtrations}\label{sec:martingales}

To analyze classical shadows beyond the standard assumption of i.i.d.\ state preparation, we rely on the framework of martingales. Let $(\Omega, \mathcal{F}, \mathbb{P})$ be a probability space equipped with a filtration $\{\mathcal{F}_t\}_{t=0}^N$, which is an increasing sequence of $\sigma$-algebras ($\mathcal{F}_0 \subseteq \dots \subseteq \mathcal{F}_N$) representing the accumulation of information over time \cite{williams1991probability}. A sequence of real-valued random variables $\{Z_t\}_{t=1}^N$ is called a \emph{martingale difference sequence} with respect to $\{\mathcal{F}_t\}$ if $Z_t$ is $\mathcal{F}_t$-measurable (that is, its value is fully determined at time $t$) and satisfies $\mathbb{E}[Z_t \mid \mathcal{F}_{t-1}] = 0$ for all $t \ge 1$.

In our context, $\mathcal{F}_t$ captures the maximal knowledge of the environment or of an adversary upon completion of round $t$, including measurement outcomes as well as any side information or environmental variables that may influence subsequent state preparation.

At each round $t$, the source prepares a quantum state $\rho_t$. While $\rho_t$ may be an arbitrarily complex function of the full history up to time $t-1$, once that history is realized, the state $\rho_t$ presented for measurement is fixed. Formally, this means the sequence $\{\rho_t\}_{t=1}^N$ forms a predictable process with respect to the filtration, meaning that $\rho_t$ is $\mathcal{F}_{t-1}$-measurable.

Crucially, we assume that the measurement procedure itself is fixed and trusted. In protocols involving random measurement bases, this means that the basis choice at round $t$ is independent of the adversary's state-preparation strategy. Conditional on the history $\mathcal{F}_{t-1}$ (which determines $\rho_t$), the measurement outcome $k_t$ is generated according to Born’s rule:
\begin{align}
\mathbb{P}(k_t = k \mid \mathcal{F}_{t-1}) = \tr(\rho_t E_k).
\end{align}
This implies that, although the outcome distribution depends on the past through $\rho_t$, the randomness of $k_t$ arises solely from the measurement of $\rho_t$.

This structure gives rise to a martingale difference sequence. Because $\rho_t$ is fixed conditional on $\mathcal{F}_{t-1}$, the conditional expectation $\mathbb{E}[\, \cdot \, | \, \mathcal{F}_{t-1}]$ acts only on the measurement outcome $k_t$. Consequently, for any observable $O$,
\begin{align}
    \mathbb{E}[\tr(O \hat{\rho}_{k_t}) \mid \mathcal{F}_{t-1}]
    &= \sum_{k=1}^M \tr(\rho_t E_k)\, \tr(O \hat{\rho}_k) \nonumber \\
    &= \tr(O \rho_t).
\end{align}
We define the estimation error at round $t$ as $Z_t = \tr(O \hat{\rho}_{k_t}) - \tr(O \rho_t)$. By construction, $Z_t$ is $\mathcal{F}_t$-measurable and satisfies $\mathbb{E}[Z_t \mid \mathcal{F}_{t-1}] = 0$, meaning that the sequence $\{Z_t\}_{t=1}^N$ satisfies the definition of a martingale difference sequence. In our context, while the magnitude of the fluctuation $Z_t$ may vary depending on the history (as the state $\rho_t$ changes), the randomness of the quantum measurement ensures that there is no systematic drift: the expected value of the estimator is always centered on the true target, regardless of the path taken to reach that round.

The main probabilistic tool in this work is Freedman’s inequality, which provides tail bounds for martingales with bounded increments and controlled quadratic variation.

\begin{theorem}[Freedman’s Inequality \cite{freedman1975tail}]
Let $\{Z_t\}_{t=1}^N$ be a martingale difference sequence with respect to a filtration $\{\mathcal{F}_t\}$, satisfying $|Z_t| \le R$ almost surely. Define the predictable quadratic variation
\begin{align}
    V_N = \sum_{t=1}^N \mathbb{E}[Z_t^2 \mid \mathcal{F}_{t-1}].
\end{align}
Then, for any $x > 0$ and $v > 0$,
\begin{align}
    \mathbb{P}\!\left(
    \left|\sum_{t=1}^N Z_t\right| \ge x
    \;\text{and}\;
    V_N \le v
    \right)
    \le
    2 \exp\!\left(
    \frac{-x^2}{2v + \frac{2}{3}Rx}
    \right).
\end{align}
\end{theorem}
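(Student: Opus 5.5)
The plan is the standard exponential supermartingale argument, with a stopping-time device that makes the event $\{V_N \le v\}$ usable even though $V_N$ is random. By symmetry (apply the argument to $-Z_t$, which is also a martingale difference sequence bounded by $R$) it suffices to bound $\mathbb{P}(S_N \ge x,\, V_N \le v)$ by $\exp(-x^2/(2v+\tfrac{2}{3}Rx))$, where $S_n = \sum_{t\le n} Z_t$. A union bound over the two tails then gives the factor $2$.

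\textbf{Step 1 (one-step mgf bound).} Fix $\lambda \in (0, 3/R)$. For $|z|\le R$ we have $e^{\lambda z} \le 1 + \lambda z + z^2 \phi(\lambda)$, where
\begin{align}
\phi(\lambda) = \frac{e^{\lambda R}-1-\lambda R}{R^2}.
\end{align}
This holds because $(e^{u}-1-u)/u^2$ is increasing in $u$. Taking $\mathbb{E}[\cdot\mid\mathcal{F}_{t-1}]$ and using $\mathbb{E}[Z_t\mid\mathcal{F}_{t-1}]=0$ gives
\begin{align}
\mathbb{E}[e^{\lambda Z_t}\mid \mathcal{F}_{t-1}] \le 1+\phi(\lambda)\,\sigma_t^2 \le \exp(\phi(\lambda)\sigma_t^2),
\end{align}
where $\sigma_t^2=\mathbb{E}[Z_t^2\mid\mathcal{F}_{t-1}]$. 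Expanding the series and using $k!\ge 2\cdot 3^{k-2}$ yields the Bernstein-type bound
\begin{align}
\phi(\lambda)\le \frac{\lambda^2}{2(1-\lambda R/3)}.
\end{align}

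\textbf{Step 2 (supermartingale and stopping).} Define $M_n = \exp(\lambda S_n - \phi(\lambda) V_n)$, with $V_n$ the partial predictable variation. Since $V_n$ is $\mathcal{F}_{n-1}$-measurable, Step 1 gives $\mathbb{E}[M_n\mid\mathcal{F}_{n-1}]\le M_{n-1}$. So $M_n$ is a nonnegative supermartingale with $M_0=1$, and $\mathbb{E}[M_N]\le 1$. On the event $A=\{S_N\ge x,\,V_N\le v\}$ we have $M_N \ge \exp(\lambda x-\phi(\lambda) v)$. Markov's inequality then gives
\begin{align}
\mathbb{P}(A)\le \exp(-\lambda x+\phi(\lambda) v).
\end{align}
Only the terminal time is needed here, so no optional stopping is required; the variance condition enters purely through the event. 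The positivity of $\phi$ is what makes $V_N\le v$ the right direction.

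\textbf{Step 3 (optimization).} Insert the bound on $\phi$ and choose
\begin{align}
\lambda = \frac{x}{v + Rx/3},
\end{align}
which lies in $(0,3/R)$. Then $1-\lambda R/3 = v/(v+Rx/3)$, so
\begin{align}
\phi(\lambda)\, v \le \frac{x^2}{2(v+Rx/3)}.
\end{align}
The exponent therefore becomes
\begin{align}
-\frac{x^2}{v+Rx/3} + \frac{x^2}{2(v+Rx/3)} = -\frac{x^2}{2v+\tfrac{2}{3}Rx}.
\end{align}
This is the claimed bound.

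The main obstacle is Step 1, specifically the clean constant $2/3$. It comes from the inequality $\phi(\lambda)\le \lambda^2/(2(1-\lambda R/3))$, which I would verify termwise via $k!\ge 2\cdot 3^{k-2}$ for $k\ge2$. The rest is bookkeeping.
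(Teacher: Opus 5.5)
Your proof is correct. The paper does not prove this statement: it imports it from \cite{freedman1975tail} and uses it as a black box in the proof of Theorem~\ref{thm:main}. So the relevant comparison is with Freedman's original argument.

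Each of your steps checks out:
\begin{itemize}
\item The one-step bound needs only $Z_t\le R$ and the monotonicity of $(e^u-1-u)/u^2$.
\item Predictability of $V_n$ makes $\exp(\lambda S_n-\phi(\lambda)V_n)$ a bounded nonnegative supermartingale.
\item Since $\lambda,\phi(\lambda)>0$, the event lies inside $\{M_N\ge e^{\lambda x-\phi(\lambda)v}\}$.
\item The inequality $k!\ge 2\cdot 3^{k-2}$ gives the Bernstein relaxation.
\item The assumption $v>0$ guarantees $\lambda R/3<1$ for your choice of $\lambda$.
\item The two-sided hypothesis $|Z_t|\le R$ is used only when you pass to $-Z_t$.
\end{itemize}

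Your argument differs from Freedman's in scope and sharpness. He bounds the time-uniform event $\{S_n\ge x,\ V_n\le v \text{ for some } n\le N\}$. That version comes from stopping the same supermartingale at $\tau=\min\{n:S_n\ge x\}$, then using $\mathbb{E}[M_{\tau\wedge N}]\le 1$ together with the monotonicity of $V_n$. He also states the sharper Bennett form $\exp\bigl(-\tfrac{v}{R^2}h(\tfrac{Rx}{v})\bigr)$ with $h(u)=(1+u)\log(1+u)-u$. Your denominator $2v+\tfrac23 Rx$ follows from it via $h(u)\ge u^2/(2+2u/3)$; you instead relax to the Bernstein form before optimizing $\lambda$.

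Your terminal-time version is exactly what is stated, and it is all the paper needs. In Theorem~\ref{thm:main} the bound $V_N\le N\snorm{O}^2$ holds deterministically, so the event $\{V_N\le v\}$ is trivial there. The maximal version would only add anytime validity, for example if the experimenter stops data collection based on the observed record.

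One cosmetic point: your opening sentence mentions a stopping-time device, but your Step 2 correctly never uses one. Either drop that phrase, or add the stopping argument above to obtain the stronger statement.
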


This inequality allows us to derive concentration bounds for classical shadow estimators that remain valid under non-i.i.d.\ state preparation.

\subsection{Spatial unraveling and virtual sequences}\label{sec:virtual}

Although our framework is formulated in terms of a temporal sequence of measurements, it applies directly to the estimation of \emph{spatial averages} in many-body quantum systems. The key observation is that any simultaneous measurement of commuting local observables can be equivalently analyzed as a sequential, history-dependent process.

Consider a quantum system composed of $N$ subsystems arranged on a lattice. In a single experimental shot, we perform local measurements on all subsystems, yielding outcomes $\vec{k} = (k_1, \dots, k_N)$. In the presence of entanglement, these outcomes are generally correlated, meaning that the outcomes $k_j$ are not drawn independently from the initial marginal states.

To treat this rigorously, we analyze the simultaneous measurement as a virtual sequential process. We fix an arbitrary ordering of the subsystems $j = 1, \dots, N$ and define the filtration
\begin{align}
    \mathcal{F}_{j-1} = \sigma(k_1, \dots, k_{j-1}),
\end{align}
which represents the classical information fixed after the first $j-1$ measurement outcomes are revealed.

Once the history $\mathcal{F}_{j-1}$ is fixed, the effective quantum state of the $j$-th subsystem is determined by conditioning on the previous outcomes. Let $\rho_j^{(\vec{k}_{<j})}$ denote this conditional state. While $\rho_j^{(\vec{k}_{<j})}$ depends on the specific realization of earlier outcomes (and thus on the correlations present in the global state), it is $\mathcal{F}_{j-1}$-measurable. Consequently, the probability of observing outcome $k_j$ at step $j$ obeys Born’s rule,
\begin{align}
    \mathbb{P}(k_j \mid \mathcal{F}_{j-1})
    = \tr\!\left( \rho_j^{(\vec{k}_{<j})} E_{k_j} \right).
\end{align}
This structure exactly matches the martingale framework established in the previous subsection: conditional on the past history, $\rho_j^{(\vec{k}_{<j})}$ is fixed, and the only remaining randomness arises from the intrinsic quantum randomness of the measurement.

This perspective naturally accommodates highly entangled states. For example, consider the GHZ state $\frac{1}{\sqrt{2}}\bigl(|0\rangle^{\otimes N} + |1\rangle^{\otimes N}\bigr)
$ measured locally in the $Z$ basis. The first outcome $k_1$ determines the conditional state of all remaining subsystems: if $k_1 = 0$ they collapse to $|0\rangle$, whereas if $k_1 = 1$ they collapse to $|1\rangle$. In the virtual sequential description, this is simply an adaptive process in which each conditional state is fixed once the past history is revealed.

\section{Results}

We now present our main result: a classical shadow estimation procedure whose sample complexity guarantees hold even when the state at each round is chosen adaptively and may depend arbitrarily on the entire prior experimental history.

Throughout this section, we assume that the measurement is fixed and described by an IC POVM $\mathcal{E} = \{E_k\}_{k=1}^M$. The quantum state $\rho_t$ prepared at round $t$ may depend arbitrarily on previous history. Instead of using the standard median-of-means estimator for classical shadows, we replace it with a truncated estimator \cite{lugosi2019mean, lugosi2021robust, aliakbarpour2025shadow}.

\subsection{Data acquisition}

The data acquisition protocol specifies how measurement data are collected from the quantum device, independently of any particular observable of interest.

\begin{enumerate}
    \item \textit{Initialization:} Fix an IC POVM $\mathcal{E} = \{E_k\}_{k=1}^M$ and the associated dual frame $\{\hat{\rho}_k\}$ defined in Eq.~\eqref{eq:estimator_def}.
    
    \item \textit{Sequential measurement:} For each round $t = 1, \dots, N$, the source prepares a quantum state $\rho_t$, and the device measures $\rho_t$ using $\mathcal{E}$ to obtain and record the outcome $k_t$.
\end{enumerate}

At the end of this protocol, the experimenter has access to the outcome sequence $\{k_t\}_{t=1}^N$, which constitutes the full measurement record.

\subsection{Observable estimation}

Given a measurement record $\{k_t\}_{t=1}^N$ obtained from the data acquisition protocol, we estimate the time-averaged expectation value of an observable $O$.

Since the observable $O$ is known to the experimenter, we classically decompose it into a traceless part $O_0$ and a constant identity part:
\begin{align}
    O = O_0 + \mu \mathds{1}, \quad \text{where } \mu = \frac{\tr(O)}{d}.
\end{align}
The expectation value is simply:
\begin{align}
    \tr(O\rho) = \tr(O_0 \rho) + \mu.
\end{align}
Since $\mu$ is a known constant, it suffices to estimate $\tr(O_0 \rho)$ using the shadow protocol. The sample complexity is then determined entirely by the properties of $O_0$. Consequently, for the remainder of this section, we assume without loss of generality that $\tr(O) = 0$.

Fix a threshold $T \geq 0$. Then, 
\begin{enumerate}
    \item \textit{Single-shot estimator:} For each round $t$, compute
    \begin{align}
        X_t = \tr(O \hat{\rho}_{k_t}),
    \end{align}
    and define the symmetrically truncated variable
    \begin{align}
        Y_t =
        \begin{cases}
            X_t, & |X_t| \le T, \\
            T\,\mathrm{sign}(X_t), & |X_t| > T .
        \end{cases}
    \end{align}
    
    \item \textit{Output:} Return the empirical mean
    \begin{align}
        \hat{o}(N) = \frac{1}{N} \sum_{t=1}^N Y_t .
    \end{align}
\end{enumerate}

This estimator targets the time-averaged observable
\begin{align}
    \bar{o} = \frac{1}{N} \sum_{t=1}^N \tr(O \rho_t).
\end{align}
Importantly, the same measurement record may be post-processed to estimate many different observables.

\subsection{Main theorem}

To state our sample complexity bound, we define the usual \emph{shadow norm} based on the worst-case second moment of the estimator \cite{huang2020predicting}:
\begin{align} 
\snorm{O}^2
=
\sup_{\sigma \in \mathcal{D}(\mathcal{H})}
\sum_{k=1}^M \tr(\sigma E_k)\, (\tr(O \hat{\rho}_k))^2.
\end{align}
This quantity upper bounds both the variance of the estimator and the bias introduced by truncation.

Let $\epsilon > 0$ denote the desired statistical accuracy and $\delta \in (0,1)$ the failure probability.

\begin{theorem}\label{thm:main}
Let $\mathcal{E}$ be an IC POVM. Let $\{O_1, \dots, O_K\}$ be a set of $K$ traceless observables. For any accuracy $\epsilon > 0$ and failure probability $\delta > 0$, define the truncation threshold for each observable as
\begin{align}
    T_i = \frac{5}{4} \frac{\snorm{O_i}^2}{\epsilon}.
\end{align}
If the number of samples satisfies
\begin{align}
    N \ge \frac{125}{24} \frac{\max_i \snorm{O_i}^2}{\epsilon^2} \log\!\left(\frac{2K}{\delta}\right),
\end{align}
then for any adaptive sequence of states $\{\rho_t\}_{t=1}^N$, the estimators $\hat{o}_i$ satisfy
\begin{align}
    \left| \hat{o}_i - \bar{o}_i \right| \le \epsilon \quad \text{for all } i = 1, \dots, K
\end{align}
simultaneously, with probability at least $1-\delta$.
\end{theorem}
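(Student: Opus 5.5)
The plan is to split the error $\hat{o}_i-\bar{o}_i$ into a deterministic truncation bias and a martingale fluctuation. Fix one observable $O=O_i$ with threshold $T=T_i$, and write $S^2=\snorm{O}^2$. Let $m_t=\E[Y_t\mid\mathcal{F}_{t-1}]$ and $Z_t=Y_t-m_t$. Then
\begin{align}
\hat{o}-\bar{o}=\frac{1}{N}\sum_{t=1}^N Z_t+\frac{1}{N}\sum_{t=1}^N\bigl(m_t-\tr(O\rho_t)\bigr).
\end{align}
The first term is a martingale difference sum, as in Sec.~\ref{sec:martingales}. Since $Y_t$ is a function of $k_t$, it is $\mathcal{F}_t$-measurable, and $\E[Z_t\mid\mathcal{F}_{t-1}]=0$ by construction. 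The second term is controlled pathwise, using only the fact that conditional on $\mathcal{F}_{t-1}$ the outcome $k_t$ follows Born's rule for the fixed state $\rho_t$. I will allot $\epsilon/5$ to the bias and $4\epsilon/5$ to the fluctuation.

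\emph{Bias step.} By unbiasedness, $\E[X_t\mid\mathcal{F}_{t-1}]=\tr(O\rho_t)$. Hence
\begin{align}
|m_t-\tr(O\rho_t)|=\bigl|\E[X_t-Y_t\mid\mathcal{F}_{t-1}]\bigr|\le\E\bigl[(|X_t|-T)_+\mid\mathcal{F}_{t-1}\bigr].
\end{align}
I will use the elementary inequality $(|x|-T)_+\le x^2/(4T)$, which follows from $x^2-4T(|x|-T)=(|x|-2T)^2\ge 0$. Together with $\E[X_t^2\mid\mathcal{F}_{t-1}]=\sum_k\tr(\rho_tE_k)\tr(O\hat{\rho}_k)^2\le S^2$, this gives a bias of at most $S^2/(4T)=\epsilon/5$ for every $t$ and every history. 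The sharp constant $1/(4T)$, rather than the cruder Markov-type bound $S^2/T$, is what produces the stated constant $125/24$. Getting this optimization right is the main delicate point; the rest is bookkeeping.

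\emph{Fluctuation step.} Since $|Y_t|\le T$, also $|m_t|\le T$, so $|Z_t|\le R:=2T$. For the quadratic variation, $\E[Z_t^2\mid\mathcal{F}_{t-1}]\le\E[Y_t^2\mid\mathcal{F}_{t-1}]\le\E[X_t^2\mid\mathcal{F}_{t-1}]\le S^2$. Thus $V_N\le v:=NS^2$ holds almost surely, and the event $\{V_N\le v\}$ in Freedman's inequality can be dropped. Apply Freedman's inequality with $x=4N\epsilon/5$ and $R=2T=5S^2/(2\epsilon)$. The denominator is
\begin{align}
2NS^2+\tfrac{2}{3}\cdot\tfrac{5S^2}{2\epsilon}\cdot\tfrac{4N\epsilon}{5}=\tfrac{10}{3}NS^2,
\end{align}
so
\begin{align}
\mathbb{P}\Bigl(\Bigl|\sum_t Z_t\Bigr|\ge\tfrac{4}{5}N\epsilon\Bigr)\le 2\exp\!\Bigl(-\frac{24N\epsilon^2}{125S^2}\Bigr).
\end{align}
On the complementary event, $|\hat{o}-\bar{o}|<4\epsilon/5+\epsilon/5=\epsilon$.

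\emph{Union bound.} Apply the previous step to each $O_i$ with its own $T_i$, using the same filtration. Each failure probability is at most $2\exp(-24N\epsilon^2/(125\snorm{O_i}^2))\le\delta/K$ whenever $N\ge\frac{125}{24}\frac{\max_i\snorm{O_i}^2}{\epsilon^2}\log(2K/\delta)$. A union bound over the $K$ observables then gives simultaneous success with probability at least $1-\delta$. The bound holds uniformly over adaptive strategies, because neither the bias estimate nor the almost-sure bounds on $R$ and $V_N$ depend on the history.
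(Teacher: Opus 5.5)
Your proposal is correct and follows essentially the same route as the paper's proof: the same bias/martingale split with an $\epsilon/5$ versus $4\epsilon/5$ allocation, the bound $(|x|-T)_+\le x^2/(4T)$, Freedman's inequality with $R=2T$, $v=N\snorm{O}^2$ and $x=\frac{4}{5}N\epsilon$, and a union bound over the $K$ observables. Your explicit observation that $|m_t|\le T$, which justifies $|Z_t|\le 2T$, is a small clarification that the paper leaves implicit.
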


A notable consequence of Theorem~\ref{thm:main} is that for any POVM admitting a classical shadow reconstruction map, the standard shadow-norm sample complexity is preserved without requiring any i.i.d.\ assumption. Moreover, our martingale-based analysis yields strictly improved numerical constants compared to conventional treatments, enhancing the practical efficiency of shadow-based estimators.

The theorem applies directly to many-body tomography experiments in which measurements are distributed across both space and time. Using the virtual sequential picture of Section~\ref{sec:virtual}, we flatten the experiment into a single adaptive process of length $N = RS$, where $R$ is the number of rounds and $S$ the number of subsystems.

We impose a lexicographical ordering $(t,i)$ on measurement events and define the corresponding filtration containing all outcomes from earlier rounds and from subsystems $j<i$ within the same round. Conditioned on this history, the effective state of subsystem $i$ at time $t$ is the reduced post-measurement state $\rho_{t,i}^{(<i)}$, so the martingale condition is satisfied.

Theorem~\ref{thm:main} therefore applies with $N=RS$, and the estimator
\begin{align}
\hat{o}_{\mathrm{global}}
=
\frac{1}{RS}
\sum_{t=1}^R \sum_{i=1}^S
\tr(O_i \hat{\rho}_{k_{t,i}})
\end{align}
concentrates around the trajectory average
\begin{align}
\bar{o}_{\mathrm{traj}}
=
\frac{1}{RS}
\sum_{t=1}^R \sum_{i=1}^S
\tr\!\left(O_i \rho_{t,i}^{(<i)}\right).
\end{align}

As immediate applications of Theorem \ref{thm:main}, we record the resulting bounds for the standard global Clifford and Pauli shadow protocols \cite{huang2020predicting, nguyen2022optimizing, hadfield2022measurements}.

\begin{corollary}\label{cor:clifford}
Let $\mathcal{E}$ be the global Clifford POVM on a $d$-dimensional Hilbert space. For any traceless observable $O$, the number of samples required to estimate the time-averaged expectation value
\begin{align}
\bar{o} = \frac{1}{N} \sum_{t=1}^N \tr(O \rho_t)
\end{align}
within additive error $\epsilon$ and failure probability $\delta$, under fully adaptive state preparation, satisfies
\begin{align}
    N &\ge \frac{125}{8} \frac{\tr(O^2)}{\epsilon^2} \log\left(\frac{2}{\delta}\right).
\end{align}
\end{corollary}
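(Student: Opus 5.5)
Corollary~\ref{cor:clifford} follows by specializing Theorem~\ref{thm:main} to a single observable ($K=1$) and bounding the Clifford shadow norm. With $K=1$ the theorem already guarantees $|\hat{o}-\bar{o}|\le\epsilon$ with probability at least $1-\delta$ whenever
\begin{align}
N \ge \frac{125}{24}\frac{\snorm{O}^2}{\epsilon^2}\log\!\left(\frac{2}{\delta}\right).
\end{align}
The truncation threshold is then $T=\tfrac{5}{4}\snorm{O}^2/\epsilon$, or any larger value obtained from an upper bound on the norm. The whole task is therefore to show $\snorm{O}^2 \le 3\tr(O^2)$ for traceless $O$ under the global Clifford POVM. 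Since $\tfrac{125}{24}\cdot 3=\tfrac{125}{8}$, this gives exactly the stated constant.

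To bound the shadow norm, I would use the explicit Clifford estimator $\hat{\rho}_{U,z}=(d+1)U^\dagger|z\rangle\langle z|U-\mathds{1}$. For traceless $O$ this gives
\begin{align}
\tr(O\hat{\rho}_{U,z})=(d+1)\langle z|UOU^\dagger|z\rangle .
\end{align}
The POVM element $E_{U,z}$ carries the uniform weight over the Clifford group, so the second moment for a state $\sigma$ is
\begin{align}
(d+1)^2\,\E_U\sum_z \langle z|U\sigma U^\dagger|z\rangle\,\langle z|UOU^\dagger|z\rangle^2 .
\end{align}
This is a degree-three polynomial in $U\otimes \bar U$. Here I would invoke that the Clifford group is a unitary $3$-design, which is the step where the Clifford choice matters beyond the 2-design property mentioned in the preliminaries. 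That lets me replace $\E_U$ by the Haar average. I would then evaluate it with the standard identity $\E_{|\psi\rangle}(|\psi\rangle\langle\psi|)^{\otimes 3}=\Pi_{\mathrm{sym}}/\binom{d+2}{3}$ applied to $|\psi\rangle=U^\dagger|z\rangle$, summed over the $d$ values of $z$. Expanding over the six permutations and using $\tr O=0$ and $\tr\sigma=1$ gives, following Huang--Kueng--Preskill,
\begin{align}
\frac{d+1}{d+2}\bigl(\tr(O^2)+2\tr(\sigma O^2)\bigr).
\end{align}
Bounding $\tr(\sigma O^2)\le\|O^2\|_\infty\le\tr(O^2)$ and taking the supremum over $\sigma$ yields $\snorm{O}^2\le 3\tr(O^2)$.

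The main obstacle is this permutation-sum computation, in particular keeping track of which of the six terms survive tracelessness and getting the prefactor right. Everything else is direct substitution into Theorem~\ref{thm:main}. Since only an upper bound on the norm is available, I would also note why that suffices. Choosing $T=\tfrac{5}{4}\cdot 3\tr(O^2)/\epsilon$ in place of the exact value is harmless, because the proof of Theorem~\ref{thm:main} only uses an upper bound on the conditional second moment $\E[X_t^2\mid\mathcal{F}_{t-1}]$. That bound controls both the truncation bias $\E[|X_t|\mathbf{1}_{|X_t|>T}]\le \E[X_t^2]/T$ and the variance term in Freedman's inequality, so replacing $\snorm{O}^2$ everywhere by $3\tr(O^2)$ preserves the argument.
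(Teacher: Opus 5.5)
Your proposal is correct and follows the paper's proof step for step. You specialize Theorem~\ref{thm:main} to $K=1$, use the $3$-design property to evaluate the Clifford second moment as $\frac{d+1}{d+2}\bigl(\tr(O^2)+2\tr(\sigma O^2)\bigr)\le 3\tr(O^2)$, and substitute to get $\frac{125}{24}\cdot 3=\frac{125}{8}$. Your remark that the threshold may be set from the upper bound $3\tr(O^2)$ is a correct detail the paper leaves implicit, but the bias inequality you quote, $\E[|X_t|\mathbf{1}_{|X_t|>T}]\le \E[X_t^2]/T$, would only give $|B_N|\le 4\epsilon/5$; the constants close because the theorem's proof uses the sharper bound $\max(0,|x|-T)\le x^2/(4T)$.
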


\begin{corollary}\label{cor:hamiltonian}
Consider a Hamiltonian
\[
H = \sum_{j=1}^m \alpha_j P_j
\]
acting on $n$ qubits, where each $P_j \in \{I,X,Y,Z\}^{\otimes n}$ is a Pauli string of weight
$w_j = |\mathrm{supp}(P_j)|$.
Let $\mathcal{E}$ be the random Pauli POVM. We define 
\begin{align}
    V_H
    =
    \sum_{j=1}^m
    \sum_{\substack{l=1 \\ P_j \sim P_l}}^m
    |\alpha_j \alpha_l|\,
    3^{|\mathrm{supp}(P_j) \cap \mathrm{supp}(P_l)|},
\end{align}
where $P_j \sim P_l$ denotes that for every qubit
$q \in \mathrm{supp}(P_j) \cap \mathrm{supp}(P_l)$,
the local Pauli operators satisfy
$P_j^{(q)} = P_l^{(q)}$.

Then, the sample complexity required to estimate the time-averaged expectation $\bar{h} = \frac{1}{T} \sum_{t=1}^T \tr(H \rho_t)$ within additive error $\epsilon$ and failure probability $\delta$ satisfies
\begin{align}
    N
    \ge
    \frac{125}{24\epsilon^2} V_H
    \log\!\left(\frac{2}{\delta}\right). 
\end{align}
\end{corollary}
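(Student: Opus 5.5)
The plan is to apply Theorem~\ref{thm:main} with $K=1$ and $O=H$. The substance of the argument is then a single estimate: for the random Pauli POVM,
\begin{align}
\snorm{H}^2 \le V_H .
\end{align}
Two preliminary remarks. First, any identity component of $H$ (a term with $P_j=I^{\otimes n}$) is a known constant and is removed as in the traceless reduction preceding Theorem~\ref{thm:main}, so I assume every $P_j$ is non-identity and $H$ is traceless. Second, the proof of Theorem~\ref{thm:main} only uses $\snorm{O}^2$ as an upper bound on the conditional second moment of $X_t$. Hence it remains valid if $\snorm{H}^2$ is replaced by any larger quantity $V\ge\snorm{H}^2$, both in the threshold $T=\tfrac54 V/\epsilon$ and in the sample bound. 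With $V=V_H$ this gives exactly the stated $N$; the number of rounds written $T$ in $\bar h$ is the sample count $N$.

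\textbf{Single-Pauli estimate.} Fix a state $\sigma$ and a single shot with bases $U=\bigotimes_q U_q$ and outcomes $z$. Since $\tr\!\big(P^{(q)}(3U_q^\dagger|z_q\rangle\langle z_q|U_q-\mathds{1})\big)=3\,s_q\,\mathbb{1}[\text{basis}_q=P^{(q)}]$ with $s_q=\pm1$ the outcome eigenvalue, we get
\begin{align}
\tr(P_j\hat\rho)=\prod_{q\in\mathrm{supp}(P_j)} 3\,s_q\,\mathbb{1}[\text{basis}_q=P_j^{(q)}].
\end{align}

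\textbf{Expanding the second moment.} Write
\begin{align}
\E[(\tr(H\hat\rho))^2]=\sum_{j,l}\alpha_j\alpha_l\,\E[\tr(P_j\hat\rho)\tr(P_l\hat\rho)].
\end{align}
Consider a pair $(j,l)$ and split the qubits into $A=\mathrm{supp}(P_j)\cap\mathrm{supp}(P_l)$ and the symmetric difference $\Delta$. If $P_j\not\sim P_l$, some qubit in $A$ requires two different bases simultaneously, so the product vanishes identically.

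If $P_j\sim P_l$, the product is nonzero only when the bases match on $A\cup\Delta$. This event has probability $3^{-|A|-|\Delta|}$. On $A$ each qubit contributes $9s_q^2=9$, and on $\Delta$ each qubit contributes $3s_q$. Conditioned on the bases matching, $\E\big[\prod_{q\in\Delta}s_q\big]=\tr(\sigma Q)$, where $Q$ is the Pauli string on $\Delta$ built from the corresponding factors of $P_j$ or $P_l$. Therefore
\begin{align}
\E[\tr(P_j\hat\rho)\tr(P_l\hat\rho)] = 3^{-|A|-|\Delta|}\,9^{|A|}\,3^{|\Delta|}\,\tr(\sigma Q)=3^{|A|}\tr(\sigma Q).
\end{align}
Using $|\tr(\sigma Q)|\le1$ and the triangle inequality term by term gives $\E[(\tr(H\hat\rho))^2]\le V_H$ uniformly in $\sigma$. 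Taking the supremum over $\sigma$ yields $\snorm{H}^2\le V_H$.

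\textbf{Expected obstacle.} The main care point is the bookkeeping of the factors of $3$ on $A$ versus $\Delta$, together with checking that the bases on $\Delta$ are forced by a single string. This holds because $\mathrm{supp}(P_j)$ and $\mathrm{supp}(P_l)$ are disjoint on $\Delta$. The only other subtlety is the monotonicity remark above, namely that Theorem~\ref{thm:main} may be invoked with an upper bound on the shadow norm. I would verify this by noting that the bias and the variance bounds in its proof depend on $\snorm{O}^2$ only through inequalities of the form $\E[X_t^2\mid\mathcal F_{t-1}]\le\snorm{O}^2$.
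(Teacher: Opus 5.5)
Your proposal is correct and follows essentially the same route as the paper: expand $\E[(\tr(H\hat\rho))^2]$ pairwise, observe that incompatible pairs vanish identically, bound each compatible pair by $3^{|\mathrm{supp}(P_j)\cap\mathrm{supp}(P_l)|}$, and substitute $\snorm{H}^2\le V_H$ into Theorem~\ref{thm:main}. The differences are minor refinements. You compute the exact pair value $3^{|A|}\tr(\sigma Q)$ before bounding it, whereas the paper bounds directly by match probability times maximal magnitude; and you make explicit two points the paper leaves implicit, namely the traceless reduction and that Theorem~\ref{thm:main} may be run with any upper bound $V\ge\snorm{H}^2$ in both the threshold and the sample count.
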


The proofs of both corollaries are provided in the appendix.

\subsection{Proof of the main theorem}

We provide a derivation of Theorem~\ref{thm:main} using Freedman's inequality. 

\begin{proof}
   
    Let $\mathcal{F}_{t-1}$ denote the $\sigma$-algebra generated by all the history up to round $t-1$. The quantum state $\rho_t$ prepared at round $t$ is $\mathcal{F}_{t-1}$-measurable. Thus, conditional on $\mathcal{F}_{t-1}$, the state $\rho_t$ is fixed, and the only remaining randomness at round $t$ arises from the measurement outcome $k_t$. Consequently, conditioning on $\mathcal{F}_{t-1}$ is equivalent to conditioning on $\rho_t$, which means that $\E[Y_t \mid \mathcal{F}_{t-1}] = \mathbb{E}_{k \sim p(\rho_t)}
    \left[Y_t\right]$ and $\E[X_t \mid \mathcal{F}_{t-1}] = \mathbb{E}_{k \sim p(\rho_t)}
    \left[X_t\right] = \tr (O \rho_t)$.

    We decompose the estimation error into a martingale fluctuation term and a deterministic bias term induced by truncation, control each contribution separately, and then apply Freedman's inequality.

    We have
    \begin{align}
        \hat{o} - \bar{o} &= \frac{1}{N}\sum_{t=1}^N Y_t - \frac{1}{N}\sum_{t=1}^N \tr(O\rho_t) \nonumber \\
        &= M_N + B_N
    \end{align}
    where
    \begin{align}
        M_N &= \frac{1}{N}\sum_{t=1}^N (Y_t - \E[Y_t \mid \mathcal{F}_{t-1}]) \\
        B_N &=\frac{1}{N}\sum_{t=1}^N (\E[Y_t \mid \mathcal{F}_{t-1}] - \tr(O\rho_t)).
    \end{align}    

    \textit{Bounding the bias term.} Recall that $Y_t = X_t$ for $|X_t| \le T$ and $Y_t = T\operatorname{sgn}(X_t)$ otherwise. The pointwise error is $|X_t - Y_t| = \max(0, |X_t| - T)$. We use the inequality
    \begin{align}
        \max(0, |x| - T) \le \frac{x^2}{4T},
    \end{align}
    which follows from the expansion of $(|x|-2T)^2 \ge 0$. Taking the expectation yields a bound in terms of the second moment:
    \begin{align}
        \left| \E[X_t - Y_t| \mathcal{F}_{t-1}] \right| \le \E_{k \sim p(\rho_t)}\left[\frac{X_t^2}{4T} \right] \le \frac{\snorm{O}^2}{4T}.
    \end{align}
    Substituting our choice of $T = \frac{5 \snorm{O}^2}{4 \epsilon}$, we obtain a deterministic bound on the total bias:
    \begin{align}
        |B_N| \le \frac{1}{N} \sum_{t=1}^N \frac{\snorm{O}^2}{4T} = \frac{\snorm{O}^2}{4 \left( \frac{5\snorm{O}^2}{4\epsilon} \right)} = \frac{\epsilon}{5}.
    \end{align}

    \textit{Bounding the fluctuation.} The term $M_N$ is a sum of martingale differences $Z_t = Y_t - \E[Y_t|\mathcal{F}_{t-1}]$ with respect to the filtration $\{\mathcal{F}_t\}_{t=0}^N$. This is because, by construction, $Z_t$ is $\mathcal{F}_t$-measurable and satisfies $\E[Z_t \mid \mathcal{F}_{t-1}] = 0$. Since $|Y_t| \le T$, the range of the differences is bounded by $|Z_t| \le 2T$.

    The conditional variance is bounded by the second moment:
    \begin{align}
        \E[Z_t^2 \mid \mathcal{F}_{t-1}] \le \E[Y_t^2 \mid \mathcal{F}_{t-1}] \le \E[X_t^2 \mid \mathcal{F}_{t-1}]. 
    \end{align}
    We upper bound this quantity by maximizing over all quantum states to obtain the shadow norm,
    \begin{align}
    \E[X_t^2 \mid \mathcal{F}_{t-1}]  \le \snorm{O}^2.
    \end{align}
    As a result, the predictable quadratic variation satisfies the deterministic bound
    \begin{align}
        V_N = \sum_{t=1}^N \E[Z_t^2 \mid \mathcal{F}_{t-1}]
        \le N \snorm{O}^2.
    \end{align}
    
We apply Freedman's inequality with range parameter $R = 2T$.
Since $V_N \le N\snorm{O}^2$ holds deterministically, the variance condition in the inequality is satisfied with probability one.
Setting $x = \frac{4}{5} N\epsilon$, we obtain
\begin{align}
    \mathbb{P}\!\left(|M_N| \ge \frac{4}{5}\epsilon \right)
    \le
    2 \exp\!\left(
    \frac{-8 N\epsilon^2}
    {25\bigl(\snorm{O}^2 + \frac{8T\epsilon}{15}\bigr)}
    \right).
\end{align}
Requiring this probability to be upper bounded by $\delta$ and substituting $T$ yields:
    \begin{align}
        N \ge \frac{125}{24} \frac{\snorm{O}^2}{\epsilon^2} \log\left(\frac{2}{\delta}\right).
    \end{align}
    Combining the bounds $|B_N| \le \epsilon/5$ and $|M_N| \le 4\epsilon/5$, we conclude that $|\hat{o} - \bar{o}| \le \epsilon$ with probability at least $1-\delta$.

    To extend this to a set of $K$ observables $\{O_1, \dots, O_K\}$, we require the simultaneous estimation error of all observables to be bounded by $\epsilon$. By the union bound, the probability that at least one estimator fails is bounded by the sum of their individual failure probabilities. Setting the individual failure probability to $\delta' = \delta/K$, the simultaneous success probability is at least $1-\delta$. Substituting $\delta'$ into the sample complexity bound, we require
    \begin{align}
        N &\ge \max_{i} \left[ \frac{125}{24} \frac{\snorm{O_i}^2}{\epsilon^2} \log\left(\frac{2K}{\delta}\right) \right]
    \end{align}
    
\end{proof}

\subsection{Extension to Quantum Processes}

Using the Choi--Jamio{\l}kowski isomorphism, quantum channels can be represented as bipartite quantum states. This representation allows us to extend our previous martingale-based analysis directly to classical-shadow process tomography beyond the i.i.d.\ setting.

A quantum channel $\Lambda : \mathcal{L}(\mathcal{H}) \to \mathcal{L}(\mathcal{H})$ is fully characterized by its normalized Choi matrix
\begin{align}
    \rho_{\Lambda} = (\mathcal{I} \otimes \Lambda)(|\Omega\rangle\langle\Omega|),
\end{align}
where $|\Omega\rangle = \frac{1}{\sqrt{d}}\sum_{i=1}^d |i\rangle \otimes |i\rangle$ is the maximally entangled state.

In classical-shadow process tomography \cite{kunjummen2023shadow, levy2024classical}, one prepares a random input state $\sigma_j$ drawn from an IC ensemble $\{\sigma_j\}_{j=1}^{L}$ satisfying $\frac{1}{L} \sum_{j=1}^{L} \sigma_j = \frac{1}{d}\,\mathds{1}$, applies the channel, and measures the output using an IC POVM
$\{E_k\}_{k=1}^{M}$. The joint probability of observing the outcome pair $(j,k)$ is
\begin{align}
    p_{jk}
    =
    \frac{1}{L}\,\tr\!\left(E_k\,\Lambda(\sigma_j)\right) =
    \tr\!\left(
        \left(\frac{d}{L}\,\sigma_j^{T} \otimes E_k\right)
        \rho_\Lambda
    \right).
\end{align}
The operators $\left\{ \frac{d}{L} \sigma_j^T \otimes E_k \right\}$ form an IC POVM on $\mathcal{H} \otimes \mathcal{H}$, with associated frame superoperator $\mathcal{S}$. By construction, the corresponding single-shot estimator
\begin{align}
    \hat{\rho}_{jk}
    =
    \mathcal{S}^{-1}\!\left(
    \frac{d}{L} \sigma_j^T \otimes E_k
    \right)
\end{align}
is unbiased, satisfying $\mathbb{E}[\hat{\rho}_{jk}] = \rho_\Lambda$.

We consider an experiment consisting of $N$ rounds, where at each round $t$ the device implements a quantum channel $\Lambda_t$ that may depend arbitrarily on the past through the filtration $\mathcal{F}_{t-1}$. We impose a constraint: the choice of $\Lambda_t$ does not depend on the input state at time $t$,  $\sigma_{j_t}$. Under this assumption, for any observable $O$ acting on $\mathcal{H} \otimes \mathcal{H}$,
\begin{align}
    \mathbb{E}\!\left[ \tr(O (\hat{\rho}_{jk})_t) \mid \mathcal{F}_{t-1} \right]
    =
    \tr\!\left( O\, \rho_{\Lambda_t} \right),
\end{align}
and therefore the deviations $
Z_t
=
\tr(O \hat{\rho}_{\Lambda_t})
-
\tr\!\left( O\, \rho_{\Lambda_t} \right)$
form a martingale difference sequence with respect to the filtration $\{\mathcal{F}_t\}_{t=0}^{N}$.

Our goal is to estimate the time-averaged quantity
\begin{align}
    \bar{o}
    =
    \frac{1}{N}
    \sum_{t=1}^{N}
    \tr\!\left(O\,\rho_{\Lambda_t}\right),
\end{align}
for an arbitrary observable $O$ on the bipartite space. The corresponding shadow norm is defined as
\begin{align}
    \snorm{O}^2
    =
    \sup_{\rho}
    \sum_{j,k}
    \frac{d}{L}
    \tr\!\left(
        (\sigma_j^{T} \otimes E_k)\,\rho
    \right)
    \left(
        \tr\!\left(O\,\hat{\rho}_{jk}\right)
    \right)^2,
\end{align}
where the supremum is taken over all normalized Choi states $\rho \in \mathcal{D}(\mathcal{H}\otimes\mathcal{H})$.

Since the martingale structure is preserved, Freedman’s inequality applies exactly as in the proof of Theorem~\ref{thm:main} when using the truncated estimator, yielding the following result.

\begin{theorem}\label{thm:channel}
Let $\{\Lambda_t\}_{t=1}^{N}$ be an adaptive sequence of quantum channels. For any observable $O$ acting on $\mathcal{H} \otimes \mathcal{H}$, the number of samples required to estimate the average value $\bar{o}$ within additive error $\epsilon$ and failure probability $\delta$
satisfies
\begin{align}
    N
    \ge
    \frac{125}{24}
    \frac{\snorm{O}^2}{\epsilon^2}
    \log\!\left(\frac{2}{\delta}\right),
\end{align}
when using the truncation threshold $T = \frac{5}{4}\,\frac{\snorm{O}^2}{\epsilon}$.
\end{theorem}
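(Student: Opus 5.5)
The plan is to reduce Theorem~\ref{thm:channel} to the argument already used for Theorem~\ref{thm:main}. We view each round as the measurement of the Choi state $\rho_{\Lambda_t}$ with the bipartite POVM $\{\frac{d}{L}\sigma_j^T\otimes E_k\}$, and reuse the same bias/fluctuation decomposition.

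\emph{Filtration and conditional law.} Let $\mathcal{F}_{t-1}$ be the full history up to round $t-1$. By assumption $\Lambda_t$ is $\mathcal{F}_{t-1}$-measurable and independent of the input label $j_t$. The label $j_t$ is drawn uniformly from $\{1,\dots,L\}$, independently of $\mathcal{F}_{t-1}$, so
\begin{align}
\mathbb{P}(j_t=j,k_t=k\mid\mathcal{F}_{t-1}) = \tfrac{1}{L}\tr\!\bigl(E_k\Lambda_t(\sigma_j)\bigr) = \tr\!\bigl((\tfrac{d}{L}\sigma_j^T\otimes E_k)\rho_{\Lambda_t}\bigr).
\end{align}
This identity is the step needing care. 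It uses $\Lambda_t(\sigma)=d\,\tr_1[(\sigma^T\otimes\mathds{1})\rho_{\Lambda_t}]$ together with the fact that $\Lambda_t$ may not depend on $j_t$. If an adversary could see $j_t$, the joint law would no longer be Born's rule for a single fixed Choi state, and the argument would break. With the identity in hand, $X_t=\tr(O\hat\rho_{j_tk_t})$ satisfies $\E[X_t\mid\mathcal{F}_{t-1}]=\tr(O\rho_{\Lambda_t})$ by unbiasedness of the dual frame.

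\emph{Bias and variance.} Define the truncated variables $Y_t$ with threshold $T$ as in the state case, and split $\hat o-\bar o=M_N+B_N$ exactly as in the proof of Theorem~\ref{thm:main}. The conditional second moment $\E[X_t^2\mid\mathcal{F}_{t-1}]$ equals the shadow-norm functional evaluated at $\rho=\rho_{\Lambda_t}$. Since $\rho_{\Lambda_t}$ is a normalized Choi state, this is at most $\snorm{O}^2$ as defined for processes. The pointwise inequality $\max(0,|x|-T)\le x^2/(4T)$ then gives $|B_N|\le \snorm{O}^2/(4T)=\epsilon/5$.

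\emph{Fluctuation.} The increments $Z_t=Y_t-\E[Y_t\mid\mathcal{F}_{t-1}]$ satisfy $|Z_t|\le 2T$ and $V_N\le N\snorm{O}^2$ deterministically. Freedman's inequality with $R=2T$ and $x=\frac45N\epsilon$ gives a bound whose exponent is
\begin{align}
-\frac{8N\epsilon^2}{25\bigl(\snorm{O}^2+\tfrac{8T\epsilon}{15}\bigr)} = -\frac{24N\epsilon^2}{125\,\snorm{O}^2},
\end{align}
after substituting $T=\frac54\snorm{O}^2/\epsilon$, since $\frac{8T\epsilon}{15}=\frac23\snorm{O}^2$. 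Requiring $2e^{-24N\epsilon^2/(125\snorm{O}^2)}\le\delta$ gives the stated $N$. Combining $|M_N|\le\frac45\epsilon$ with $|B_N|\le\epsilon/5$ completes the proof.

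A union bound, as in the proof of Theorem~\ref{thm:main}, extends the result to $K$ observables if desired. The main obstacle is the first step: justifying that the joint input/output statistics conditioned on the history are exactly Born's rule on $\rho_{\Lambda_t}$. Everything after that is a verbatim copy of the earlier proof.
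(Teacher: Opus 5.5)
Your proposal is correct and follows the same route as the paper. The paper also treats each round as a Born-rule measurement of the Choi state $\rho_{\Lambda_t}$ with the POVM $\{\frac{d}{L}\sigma_j^T\otimes E_k\}$, gets the martingale property from the assumption that $\Lambda_t$ does not depend on $\sigma_{j_t}$, and then reruns the truncation-bias plus Freedman argument of Theorem~\ref{thm:main} with the same constants; you simply make explicit the Choi identity and the bookkeeping $\frac{8T\epsilon}{15}=\frac{2}{3}\snorm{O}^2$ that the paper leaves implicit.
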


This result shows that verifying quantum processes in the presence of drift or memory effects requires no more samples than in the idealized i.i.d.\ setting, provided that the channel does not depend on the input states.

Finally, we note that for quantum processes one is often interested in
quantities of the form
$\tr\!\left(H\,\Lambda(\sigma)\right)
=
\tr\!\left(d\,(\sigma^{T} \otimes H)\,\rho_\Lambda\right)$,
corresponding to the choice
$O = d\,\sigma^{T} \otimes H$.
For generic observables of this form, the process shadow norm typically
scales exponentially with the system dimension, as shown in Refs.~\cite{kunjummen2023shadow, levy2024classical}.

\section{Discussion and conclusions}

In this work, we have shown that the standard sample-complexity guarantees of classical shadow tomography for quantum states and processes do not fundamentally rely on an i.i.d.\ assumption. If the prepared states or channels are generated adaptively or adversarially, there is no fixed, stationary object. Instead, the estimator targets the trajectory-averaged expectation $\frac{1}{N} \sum_{t=1}^N \mathrm{tr}(\rho_t O)$, which is the operationally meaningful quantity in long experimental runs. We demonstrate that even in this regime, the number of samples required to estimate this average remains governed by the usual shadow norm $\snorm{O}^2$.

In our analysis, the prepared state (or Choi matrix) $\rho_t$ at round $t$ is fully determined by the previous experimental history. Conditioned on this history, Born’s rule defines a fixed outcome distribution, and the deviation of the shadow estimator from its instantaneous expectation value forms a martingale difference sequence. This structure, together with the use of a truncated shadow estimator, allows us to apply powerful concentration inequalities to bound the estimation error. As a consequence, we achieve a sample complexity that matches the i.i.d.\ rate without assuming independence, stationarity, or stability of the source, conditions that are typically violated in realistic quantum experiments.

While our explicit corollaries focus on the canonical random Clifford and Pauli protocols, our framework applies to any IC measurement, including optimized or locally biased shadow schemes \cite{hadfield2022measurements, nguyen2022optimizing}. When implementing any such scheme, however, it is important to distinguish between fluctuations in state preparation and imperfections in the measurement itself. Our main result establishes robustness against the former, assuming the chosen POVM is characterized correctly. If the measurements are subject to noise, our analysis remains compatible with existing mitigation techniques: provided the noise channel is known, the estimators can be modified to restore unbiasedness without affecting the concentration guarantees \cite{chen2021robust, koh2022classical}.

Furthermore, the truncated mean estimator employed here offers an additional advantage: robustness against data corruption. As shown in Refs.~\cite{lugosi2021robust, aliakbarpour2025shadow}, while a small fraction of adversarially corrupted measurement outcomes can significantly compromise median-of-means estimators, the truncated estimator remains stable. This provides a practical layer of protection against outliers, hardware glitches, or even partial violations of the martingale assumption, for instance, if an adversary can predict the measurement randomness in a small number of rounds.

It is instructive to contrast our framework with alternative approaches to non-i.i.d.\ quantum learning. Full quantum state tomography can also be extended to this regime to reconstruct the entire time-averaged state \cite{zambrano2026quantum}. However, this complete information comes at a sample complexity that scales with the system dimension. Another approach, Ref.~\cite{fawzi2024learning}, relies on de Finetti theorems to extend learning protocols to non-i.i.d.\ scenarios. While theoretically powerful, this method often incurs overheads that are prohibitive for many-body systems. Moreover, the de Finetti framework targets a different effective state rather than the realized trajectory average. Although our protocol is less general, it yields significantly tighter bounds. Conversely, Ref.~\cite{neven2021symmetry} studies shadows under drift for non-linear functionals, assuming that rounds remain statistically independent. Our work addresses a complementary regime, allowing for fully adaptive, history-dependent preparation in the case of linear observables.

The robustness of classical shadows under adaptive state preparation suggests several conceptual and practical directions. In particular, our results point to a natural interface between classical shadows and adversarial quantum verification protocols, in which a verifier seeks to estimate properties of states prepared by an untrusted server employing history-dependent strategies. More broadly, extending these guarantees beyond linear observables remains an open challenge. Quantities such as purities, entropies, or other non-linear functions of the state rely on biased estimators or correlated shadow samples. Understanding how such bias interacts with adaptive noise, and whether martingale concentration techniques can control the resulting error propagation, is an important direction for future work.

Beyond classical shadows, the techniques employed in this work are likely to be applicable to a wide range of learning protocols. The essential structural requirement is conditional unbiasedness: at each round, the single-shot estimator must have expectation equal to the instantaneous target quantity, conditioned on the prior experimental history. This ensures the martingale structure and the concentration bounds that come with it.

In conclusion, we have shown that classical shadow tomography is a statistically robust procedure that remains effective well beyond idealized i.i.d. conditions. By shifting the analytical perspective from independent sampling to martingale processes, we demonstrate that the efficiency of learning quantum systems need not be compromised by the complexity and adaptivity of their environments.

\begin{acknowledgments}
The author is grateful to Mariana Navarro, Luciano Pereira and Antonio Acín for fruitful discussions. This work was supported by the Government of Spain (Severo Ochoa CEX2019-000910-S, FUNQIP, and European Union NextGenerationEU PRTR-C17.I1), Fundació Cellex, Fundació Mir-Puig, Generalitat de Catalunya (CERCA program), the EU Quantera project Veriqtas, and the EU and Spanish AEI project QEC4QEA.
\end{acknowledgments}

\bibliography{bib}

\appendix

\section{Proof of corollary: global Clifford shadows}\label{app:corollaries}
        
\begin{proof}
We apply Theorem~\ref{thm:main} by explicitly computing the frame operator and the shadow norm \cite{huang2020predicting}.

For the global Clifford protocol, the POVM effects are $E_{U,z} = U^\dagger |z\rangle\langle z| U$, where $U \in \mathrm{Cl}(d)$ and $\{|z\rangle\}$ is the computational basis. Using the fact that the Clifford group forms a unitary $2$-design, the frame superoperator takes the depolarizing form
\begin{align}
        \mathcal{S}(\rho) &= \mathbb{E}_{U \sim \mathrm{Cl}(d)} \sum_{z} U^\dagger |z\rangle\langle z| U \tr(\rho U^\dagger |z\rangle\langle z| U) \\
        &= \frac{\rho + \tr(\rho)\mathds{1}}{d+1}.
\end{align}
Inverting this channel yields the canonical shadow estimator
\begin{align}
    \hat{\rho}_{U,z}
    = (d+1) U^\dagger |z\rangle\langle z| U - \mathds{1}.
\end{align}
For a traceless observable $O$, the single-shot estimator without truncation is the random variable
\begin{align}
X = \tr(O \hat{\rho}_{U,z}) = (d+1)\langle \psi | O | \psi \rangle,
\quad
|\psi\rangle = U^\dagger |z\rangle.
\end{align}
Then, the shadow norm is
\begin{align}
\snorm{O}^2 =  \max_{\sigma}  \mathbb{E}_{U}  \left[ (d+1)^2 \sum_z \langle z|U\sigma U^\dagger|z\rangle  \langle z|U O U^\dagger|z\rangle^2 \right].
\end{align}

Since the Clifford group is a unitary $3$-design, the average over Clifford unitaries coincides with the Haar average. Then \cite{huang2020predicting}, 
\begin{align}
    \snorm{O}^2  & \leq \frac{d+1}{d+2} \max_{\sigma} \left( \tr(O^2) + 2\tr(O^2 \sigma) \right) \nonumber \\
    & \leq 3 \tr(O^2).
\end{align}

We now substitute this into the sample complexity bound from Theorem~\ref{thm:main}. The required sample size is:
\begin{align}
    N &\ge \frac{125}{8} \frac{\tr({O}^2)}{\epsilon^2} \log\left(\frac{2}{\delta}\right). 
\end{align} 
\end{proof}

\section{Proof of corollary: Pauli shadows}

\begin{proof}
We derive this result by treating the Hamiltonian $H$ as a single observable rather than bounding each term individually \cite{hadfield2022measurements}.

By the linearity of the trace and the shadow reconstruction map, the single-shot estimator for $H$ is the random variable:
\begin{align}
    X = \tr(H \hat{\rho}) = \sum_{j=1}^m \alpha_j \tr(P_j \hat{\rho}) = \sum_{j=1}^m \alpha_j \hat{o}_j,
\end{align}
where $\hat{o}_j$ is the single-shot estimator for the individual Pauli string $P_j$. The variable $\hat{o}_j$ takes values in $\{\pm 3^{w_j}, 0\}$.

We evaluate the shadow norm $\snorm{H}^2 = \sup_\sigma \E[X^2]$. We have
\begin{align}
    \E[X^2] = \E\left[ \left(\sum_{j=1}^m \alpha_j \hat{o}_j\right)^2 \right]
    = \sum_{j, l=1}^m \alpha_j \alpha_l \E[\hat{o}_j \hat{o}_l].
\end{align}
We analyze the pairwise expectation $\E[\hat{o}_j \hat{o}_l]$.
Recall that for a single qubit $q$, the local shadow estimator is $\hat{\rho}^{(q)} = 3|s_q\rangle\langle s_q| - \mathds{1}$, with $|s_q\rangle$ an eigenvector of $\{X, Y, Z\}$. For a Pauli operator $P^{(q)} \in \{I, X, Y, Z\}$, the trace $\tr(P^{(q)}\hat{\rho}^{(q)})$ is non-zero only if the measurement basis matches $P^{(q)}$ (probability $1/3$) or if $P^{(q)}=I$.

For the product $\hat{o}_j \hat{o}_l$ to be non-zero, the randomized measurement basis must simultaneously support the reconstruction of both $P_j$ and $P_l$. We distinguish two cases based on the intersection of their supports, $S_{jl} = \mathrm{supp}(P_j) \cap \mathrm{supp}(P_l)$:

\textit{Incompatible ($P_j \not\sim P_l$):} If there exists a qubit $q \in S_{jl}$ such that $P_j^{(q)} \neq P_l^{(q)}$ (and neither is identity), then $P_j$ requires the basis at $q$ to be, say, $X$, while $P_l$ requires it to be $Z$ (or $Y$). No single measurement basis can satisfy both conditions. Thus, the product of outcomes is always zero:
    \begin{align}
        \E[\hat{o}_j \hat{o}_l] = 0.
    \end{align}

\textit{Compatible ($P_j \sim P_l$):} If $P_j$ and $P_l$ are identical on all qubits in their intersection, a simultaneous non-zero outcome is possible.
    The union of their supports is $S_{\cup} = \mathrm{supp}(P_j) \cup \mathrm{supp}(P_l)$.
    A non-zero value occurs only if the chosen measurement basis matches $P_j$ and $P_l$ on all qubits in $S_{\cup}$.
    The probability of this event is $p_{\text{match}} = (1/3)^{|S_{\cup}|}$.
    Conditional on matching, the value of the product is:
    \begin{align}
        \hat{o}_j \hat{o}_l = (\pm 3^{w_j}) \times (\pm 3^{w_l}) = \pm 3^{w_j + w_l}.
    \end{align}
    Then, we use the bound:
    \begin{align}
        |\E[\hat{o}_j \hat{o}_l]|
        &\le \mathbb{P}(\text{bases match } P_j \cup P_l) \times \max |\hat{o}_j \hat{o}_l| \nonumber \\
        &= \left(\frac{1}{3}\right)^{|S_{\cup}|} \cdot 3^{w_j + w_l}.
    \end{align}
    Since $|S_{\cup}| = w_j + w_l - |S_{jl}|$, we simplify:
    \begin{align}
        |\E[\hat{o}_j \hat{o}_l]|
        \le 3^{-(w_j + w_l - |S_{jl}|)} \cdot 3^{w_j + w_l}
        = 3^{|S_{jl}|}.
    \end{align}

Substituting this back into the sum, we obtain the bound for the shadow norm:
\begin{align}
    \snorm{H}^2 \le \sum_{j,l : P_j \sim P_l} |\alpha_j \alpha_l| 3^{|\mathrm{supp}(P_j) \cap \mathrm{supp}(P_l)|} = V_H.
\end{align}

Substituting $\snorm{O}^2 \le V_H$ into Theorem~\ref{thm:main}, the condition becomes:
\begin{align}
    N \ge \frac{125}{24\epsilon^2} \log\left(\frac{2}{\delta}\right)V_H.
\end{align}
\end{proof}

\end{document}